\newtheorem{theorem}{Theorem}
\begin{document}
\title{\textbf{Connectivity Algorithm}}
\author{Krishnendra Shekhawat\\
PhD, Department of Mathematics, University of Geneva\\
\small\it{Email:\; krishnendra.iitd@gmail.com}}
\maketitle
%\linespread{2.0}
{
\begin{abstract}
In this work, for the given adjacency matrix of a graph, we present an algorithm which checks the connectivity of a graph and computes all of its connected components. Also, it is mathematically proved that the algorithm presents all the desired results.
\end{abstract}

\small{Keywords:}{\it algorithm; adjacency matrix; graph; vertices.}

\small{MSC numbers:} 05C85.

\section{Introduction}

A {\it simple graph} $G$ is an unweighted,  undirected graph containing
no self loops or multiple edges. 
Two vertices of a graph {\it 
adjacent} if there exist an edge between them.
An {\it adjacency matrix} for a simple graph is a matrix with rows and columns
labeled by graph vertices, with a $1$ or $0$ in position $(\ v_i,\ v_j\ )$ 
according to whether $v_i$ and $v_j$ are adjacent or not. 
For standard terminology and notation in graph theory, used throughout the text, we refer to West[4] and Gross[3].

In the literature, there exist many algorithms for obtaining the results mentioned in the abstract (e.g. refer [1-2]). But we did not find an algorithm, which alone can compute all the results.
Therefore,  in this paper, we present an algorithm which takes adjacency matrix as input and computes all above mentioned results. 

\section{Connectivity of a graph and its connected components}

In an undirected graph $G$, two vertices $u$ and $v$  
are {\it connected} if $G$ contains a path from $u$ to $v$. $G$ is called {\it connected}
if every pair of distinct vertices in the graph can be connected through some path.
$G$ is {\it disconnected} if it is not connected. 

A {\it connected component} of an undirected graph
is a subgraph in which any two vertices are connected to each other by paths  and to
which no more vertices or edges(from the larger graph) can be added while preserving  
its connectivity. 

\subsection{Algorithm D}

Let $G$ be a simple graph with $n$ vertices and $A$ be the corresponding  
adjacency matrix of $G$. Also consider that $A'$ is a null matrix
of order $n$ and $G^{\prime}$ is its corresponding graph. At the end of the algorithm, we eill find that graphs $G$ and $G^{\prime}$ are same, only difference is in the labelling
of vertices of these graphs.
Once we have results for $G^{\prime}$, then we can easily state the
same results for $G$.
From algorithm D, we will obtain the following results: 

{\bf 1.} All isolated vertices of $G$, 

{\bf 2.} $G$ is connected or not, 

{\bf 3.} Number of connected components of $G$, vertices and order of each connected component.  

Here are the steps of the algorithm.  

{\bf D1.} [ Initialize. ] Set $i\leftarrow 1, r\leftarrow 0, A^{\prime}\leftarrow A, b\leftarrow n$.
Here $r$ gives number of isolated vertices.

{\bf D2.} If $i=n+1$, go to D5.

{\bf D3.} [ Finding an isolated vertex. ] 
If 
$\sum_{j=1}^n a_{ij} = 0$  where $a_{ij}$ is $(ij)^{th} $ element of $ A^{\prime}$, 
increase $r$ by 1, interchange $R_i$ \& $R_n$, $C_i$ \& $C_n$
where $R_i$ is $i^{th}$ row and $C_i$ is $i^{th}$ column of $A^{\prime}$. 
$n\leftarrow n-1$ and go to D2. 

{\bf D4.} [ Finding all isolated vertices. ] 
If  
$\sum_{j=1}^n a_{ij} \neq 0$, increase $i$ by 1 and go to D2. 

{\bf D5.} If $r=b$ then $l\leftarrow 0$ and algorithm terminates. 

{\bf D6.} [ Initialize. ] Set $i\leftarrow 2, p\leftarrow 2,j \leftarrow p, l\leftarrow 1,
y\leftarrow 0, l_0\leftarrow 0$.
Here $l+r$ will give number of connected components, $l_y - l_{y-1}$ will give order
of each connected component, $p$ represents the
vertex under consideration.  Let $Q$ is an empty set. 

{\bf D7.} If $i=n-1$, increase $y$ by 1, $l_y\leftarrow n$  
and terminates algorithm. 

{\bf D8.} [ Finding first non-zero entry of each row. ] For each row $j$, $j$ from $p$ to $n$,
we will calculate first non-zero entry for each row $j$, i.e.
$ k_j \leftarrow $ max $ \{ i \mid \sum_{x=1}^{i-1}  a_{jx} = 0 \} $  

{\bf D9.} [ Finding minimum of all $k_j$'s. ] Now set $Q$ will have
all $k_j$'s which are minimum of all $k_i$'s i.e. 
$ Q \leftarrow $ min $ \{ k_x \mid p \leq x \leq n \} $ 

{\bf D10.} [ Finding minimum of all $j$'s for which $k_j=Q$. ] i.e.
$s \leftarrow $ min $ \{ j \mid k_j = Q \} $ 

{\bf D11.} [ Interchanging order of labell of vertices. ]
If $(s\neq i)$, we will interchange $R_s$ and $R_i$, $C_s$ and $C_i$. 

{\bf D12.} [ Checking $G$ is connected or not. ] 
If condition $K_i$ holds i.e.  
$ \sum_{s=i+1}^n a_{is} = 0 $ and $ \forall \ t>i,\   \sum_{s=1}^i a_{ts} = 0  $  
then $G^{\prime}$ is disconnected, where $G^{\prime}$ is the graph obtained from $A^{\prime}$. Now $G^{\prime}$ and $G$ are same graphs, with only difference of labelling of vertices.
It means that if $G^{\prime}$ is disconnected, then $G$ is disconnected.
 
If $K_i$ holds, increase $l, i, y$ and $p$ by 1 and $l_y\leftarrow i$. 
Here we increase $p$ by 1 because if $K_i$ holds, $v_{y-1}$ to $v_i$ will give
a connected component and next connected component start from $v_{i+1}$ but $K_{i+1}$ can not holds,
so we increase $p$ to skip $v_{i+1}$. 

{\bf D13.} [ Finding all possible $i$'s for which $K_i$ holds. ]
Increase $p$ and $i$ by 1 and return to D7. 

\begin{theorem} 
From algorithm D 

{\bf 1.} If $r>0$ or/and $l>1$(i.e. there exist some $i$ in $A^{\prime}$ 
for which condition $K_i$ holds) then $G$ is disconnected

{\bf 2.}If 
$r=0$ and $l=1$(there does not exist some $i$ in $A^{\prime}$ 
for which condition $K_i$ holds) then $G$ is connected

{\bf 3.} $l+r$ gives the number of connected components of $G^{\prime}$
i.e. of $G$
 
{\bf 4.} $l_i - l_{i-1}$ gives the order of
each connected component, and vertices from $v_{l_{i-1}}$ to  $v_{l_{i}}$ are the vertices 
of the connected components of $G^{\prime}$ i.e. of $G$ where 
$1\leq i \leq y$ and  $v_i$ is the $i^{th}$ vertex of $G^{\prime}$.
\end{theorem}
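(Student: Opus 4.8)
The plan is to reduce all four assertions to a single structural statement about the final matrix $A'$: that after the algorithm terminates the vertices have been relabelled so that the connected components of $G'$ (equivalently of $G$, since each swap of $R_s,R_i$ together with the matching swap of $C_s,C_i$ is merely a relabelling and preserves adjacency) occupy disjoint consecutive blocks of indices, with the isolated vertices filling the last $r$ positions and the non-isolated components filling the ranges $\{l_{i-1}+1,\dots,l_i\}$ for $1\le i\le y$. Granting this, parts 3 and 4 are read off directly: $l=y$ holds at termination (both start one apart at $l=1,\ y=0$, each $K_i$-boundary increments both, and the terminal step D7 increments only $y$), this common value is the number of non-isolated blocks, and $r$ counts the singleton isolated blocks, so $l+r$ counts all components; parts 1 and 2 are then the special cases $l+r=1$ versus $l+r\ge 2$.

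First I would verify Phase 1 (steps D1--D5). Each execution of D3 detects a vertex whose row sum over the current index range is $0$; since a vertex is isolated iff its whole row and column are $0$, such a vertex is genuinely isolated, and moving it to position $n$ and decrementing $n$ exactly deletes an isolated vertex from the working submatrix without altering any other row sum (isolated vertices contribute no $1$'s to other rows). Re-testing the same position $i$ after a swap ensures nothing is skipped, so by induction on the number of swaps the last $r$ positions are precisely the isolated vertices and the leading positions induce $G$ with all isolated vertices removed; this yields result 1 and the singleton components counted by $r$.

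The core is Phase 2 (steps D6--D13), and I would prove the contiguity statement by induction on the position $i$, maintaining the invariant that at the start of each iteration the placed positions $1,\dots,i-1$ consist of several already-completed components followed by the partially placed vertices of one ``current'' component. The key step is: while the current component still has unplaced vertices, its placed and unplaced parts partition a connected subgraph, so some unplaced vertex of it is adjacent to a placed vertex and hence has its first non-zero entry $k_j\le i-1$, whereas any unplaced vertex lying in a not-yet-started component has only zeros in columns $1,\dots,i-1$ and so $k_j\ge i$. Since D8--D11 bring to position $i$ a vertex of minimal $k_j$ (ties broken by least index), the algorithm necessarily selects a vertex of the current component, so no two components are ever interleaved. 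When, and only when, the current component is exhausted does the cut between $\{1,\dots,i\}$ and $\{i+1,\dots,n\}$ carry no edge, which is exactly condition $K_i$; I would check both directions, using the symmetry of $A'$ to see that the two clauses of $K_i$ coincide with ``no crossing edge.'' Recording $l_y\leftarrow i$ at each such boundary and at termination then gives result 4, and counting the boundaries gives result 3.

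The main obstacle I anticipate is the rigorous bookkeeping in Phase 2 rather than the geometric idea. I must track the pointer $p$, which is advanced an extra step when $K_i$ holds so that the first vertex of each new component is not re-tested, against the position counter $i$; confirm that $K_i$ is evaluated at exactly the indices that are genuine boundaries and skipped only where it provably cannot hold; and settle the off-by-one in the termination test $i=n-1$ together with the final assignment $l_y\leftarrow n$, so that the last block is closed correctly. Re-establishing the invariant cleanly across the ``restart'' at each boundary --- reinitialising the notion of ``current component'' for the vertex now sitting at position $p$ --- is where the argument will need the most care.
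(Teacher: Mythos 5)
Your proposal is correct in strategy but takes a genuinely different route from the paper. The paper proves parts 1 and 2 by induction on the number of vertices: assuming the statement for graphs on $p$ vertices, it builds $G_{p+1}$ by attaching one new vertex to $G_p^{\prime}$ and splits into cases according to whether the new vertex has degree zero, is joined only to vertices in positions $1,\dots,s$, only to positions $s+1,\dots,n$, or to both sides of an existing index $s$ at which $K_s$ holds; parts 3 and 4 are then dispatched in a few lines by counting the indices at which $K_s$ holds. You instead prove a single loop invariant on the algorithm's execution: the placed prefix always consists of finished components followed by part of one current component; the minimal-$k_j$ selection of D8--D11 can never interleave two components (a vertex of a not-yet-started component has all zeros in columns $1,\dots,i-1$, hence $k_j\geq i$, while connectivity forces some unplaced vertex of the current component to have $k_j\leq i-1$); and, by symmetry of $A^{\prime}$, condition $K_i$ holds exactly at the empty cuts, i.e.\ the component boundaries. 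This buys you three things the paper's induction does not. First, your argument actually analyzes steps D8--D11, which the paper's proof never touches at all. Second, parts 3 and 4 fall out as immediate corollaries of one structural lemma, whereas the paper handles them with separate, quite informal counting remarks. Third, you avoid the weak point of the paper's induction: the paper tacitly assumes that running algorithm D on $A_{p+1}$ behaves like the run on $A_p$ plus one extra step, but adding a vertex can change all the $k_j$'s and hence all the swaps, so $A_{p+1}^{\prime}$ need not extend $A_p^{\prime}$ in any simple way; your invariant is anchored to the actual execution and has no such dependence. What the paper's route buys is brevity and an argument phrased at the level of the graph rather than of the execution. The bookkeeping issues you flag (the termination test $i=n-1$, the interaction of the increments in D12 and D13, and whether $l_y\leftarrow i$ uses the old or new value of $i$) are genuine imprecisions in the algorithm as stated---on the worked example the written rules do not quite reproduce the printed trace---so your proof will need to fix a definite reading of D7, D12, D13; note the paper's own proof silently ignores all of them. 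One point you should make explicit when justifying the skipped test: after a boundary, $K_{i+1}$ provably cannot hold because Phase 1 has removed every isolated vertex, so each remaining component has order at least $2$.
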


\begin{proof} First we will show that if $r>0$, then $G$ is disconnected.

If $r>0$ then for any $i$ in $A$, we have
$\sum_{j=1}^n a_{ij} = 0$. It means that
$v_i $ is not connected to any one of the vertices in $G$ 
i.e. there does not exist any path between $v_i$ and other vertices of $G$. Hence $G$ is disconnected.

Since the Theorem has three parts, we will prove each
part one by one. First we will prove first and second part i.e. 
\\ \\
1. If $r>0$ or/and $l>1$ then $G$ is disconnected else if 
$r=0$ and $l=1$ then $G$ is connected

We will prove it by mathematical induction on $n$.
Consider that $G$ is disconnected and $n=2$.
So only possible $G$ in this case is
  
.     \ \ \ \ \ \     .
 
For this $G$, we have
$
A^{\prime} =\begin{bmatrix} 0 & 0\\0 & 0 \end{bmatrix}
$.
In $A^{\prime}$, for $i=1$ and $i=2$ we have
$\sum_{j=1}^n a_{ij} = 0$ 
 
$\implies r>0 \implies$ result is true for $n=2$. 

Let the result be true for $n=p$. 
We have to show that the result will be true for $n=p+1$.
Now since $G_p$ is disconnected, where $G_p$ is a 
simple graph when $n=p$,therefore the following cases are possible:

1. $r=0$ and $l>1$

2. $r=p$ and $l=0$

3. $r=k, 0<k<p$ and $l>0$
\\ \\
Consider
$r=0$ and $l>1$.
In this case, first we obtain $A_p^{\prime}$ from $A_p$ and 
$G_p^{\prime}$ from $A_p^{\prime}$,
where $A_p$ is the adjacency matrix of graph $G_p$ and $A_p^{\prime} $ is obtained from $A_p$ by applying algorithm D.
When we add a vertex to $G_p^{\prime}$, 
we will obtain a graph say $G_{p+1}$. From
$G_{p+1}$, we can obtain $A_{p+1}$ and then by applying algorithm D to 
$A_{p+1}$, we will get $A_{p+1}^{\prime} $ and from $A_{p+1}^{\prime} $
we will get $G_{p+1}^{\prime}$.
We can obtain $G_{p+1}^{\prime} $ from $G_p^{\prime} $ in following two ways:

{\it 1. By adding a vertex and no edge i.e. a vertex of degree zero}

{\it 2. By adding a vertex and $k>0$ edges  i.e. a vertex of degree $k$} 
\\ \\
{\it 1. By adding a vertex and no edge} 

Let  $v_x$ be the added vertex. 
In this case for $i=x$ we have
$\sum_{j=1}^n a_{ij} = 0$ 
$\implies G $ is disconnected 
$\implies$ result is true for $n=p+1$.
\\ \\
{\it 2. By adding a vertex and $k$ edges} 

Now since result is true for n=p,\   therefore in $A_p^{\prime} $, 
there exist some $s$  for which $K_s$ holds.
To obtain $G_{p+1}$, this vertex  of degree $k$ can be added to $G_p^{\prime}$ 
in 3 different ways:

{\sl a. 
Let $v_x$ be adjacent to $k$ vertices which correspond  
to any of the $k$ rows from $1$ to $s$ in $A_p^{\prime}$ .}

{\sl b. 
Let $v_x$ be adjacent to $k$ vertices which correspond  
to any of the $k$ rows from  $s+1$ to $n$ in $A_p^{\prime}$ .}  

{\sl c. Let $v_x$ be adjacent to $k$ vertices which correspond  
to any of the $k-r$ rows from $1$ to $s$ and $r$ rows from  $s+1$ to $n$ in $A_p^{\prime}$. } 
\\ \\
{\sl a. Let $v_x$ be adjacent to $k$ vertices which correspond  
to any of the $k$ rows from $1$ to $s$ in $A_p^{\prime}$.}

When we add $v_x $ to $G_p^{\prime}$ and apply algorithm D,\   
because of $R_x,\   s$ in $A_p^{\prime} $ becomes $s+1$ 
in $A_{p+1}^{\prime} $.
Therefore we only need to show that $ K_{s+1} $ holds for $A_{p+1}^{\prime} $. 
In $A_p^{\prime} $ there exist some $s$ for which $K_s$ holds and 
since $k$ vertices are adjacent to $v_x$ in $G_{p+1}^{\prime}$  
$\implies$ each of these $k$ vertices are adjacent to any 
one of the $k$ vertices in $G_p^{\prime}$,\   numbered from $1$ to $s$ 
$\implies$ these $k$ edges are not adjacent to any one of the vertices 
in $G_p^{\prime}$ numbered from $s+1$ to $n$   
$\implies K_{s+1}$ holds for $A_{p+1}^{\prime} $ 
$\implies$ result is true for $n=p+1$.
\\ \\
{\sl b. Let $v_x$ be adjacent to $k$ vertices which correspond  
to any of the $k$ rows from  $s+1$ to $n$ in $A_p^{\prime}$ .  } 

Since $k$ edges are added to vertices corresponding to rows from $s+1$ to $n$,\   therefore 
after addition of vertex $v_x $,\   $s$ in $A_p^{\prime} $ will remain $s$ in $A_{p+1}^{\prime} $.
Therefore we only need to show that $K_s$ holds for $A_{p+1}^{\prime} $.
Now in $A_p^{\prime} $ there exist some $s$  for which $K_s$ holds.
Since $k$ vertices are adjacent to $v_x$ in $G_{p+1}^{\prime}$ 
$\implies$ these $k$ edges are adjacent to any one of $k$ vertices in $G_p^{\prime}$ numbered from $s+1$ to $n$   
$\implies$ these $k$ edges are not adjacent to any one of $k$ vertices in $G_p^{\prime}$ numbered from $1$ to $s$ 
$\implies K_s$ holds in $A_{p+1}^{\prime} $ for some $s$ 
$\implies$ result is true for $n=p+1$.
\\ \\
{\sl c. Let $v_x$ be adjacent to $k$ vertices which correspond  
to any of the $k-r$ rows from $1$ to $s$ and $r$ rows from  $s+1$ to $n$ in $A_p^{\prime}$.  } 

Since result is true for $n=p $ 
$\implies $ there exist some $s_1$ in $A_p^{\prime} $ for which $K_{s_1}$ holds.
Now after addition of $v_x$ in $G_p^{\prime}$,\   $v_x$ be adjacent to $k$ vertices which correspond  
to any of the $k-r$ rows from $1$ to $s$ and $r$ rows from  $s+1$ to $n$ 
$\implies$ for this $s_1$ either 
$ \sum_{p=s+1}^n a_{sp} \neq 0$  
or 
there exist atleast one $t>s$ for which  
$\sum_{p=1}^{s} a_{tp} \neq 0$ 
$\implies$ for this $s_1,\   K_{s_1}$ does not hold
$\implies$ there does not exist some other $s_i$ in $A_p^{\prime} $ for which $K_{s_i} $ holds.   
$\implies$ result is true  for $n=p+1$.
If there exist some other $s_i$'s in $A_p^{\prime} $ for which $K_{s_i}$ holds 
then for these $s_i$'s, case (c) will become case (a) or case (b) and 
since result is true for case (a) and case (b)  
$\implies$ result is true for case (c) also    
$\implies$ result is true  for $n=p+1$.
 \\ \\
Now  we prove the result when $r=p$ and $l=0$. 

For this case we add a vertex and $k$ edges to $G_p^{\prime}$.
If $k<p$ then $r>0$ for $G_{p+1}^{\prime}$.
If $k=p$ then $r=0$ and $l=1$ for $G_{p+1}^{\prime}$. 
Hence result is true for $n=p+1$. 
\\ \\
Now  we prove the result when when $r=k, 0<k<p$ and $l>0$. In this case,

a. if we add a vertex of degree $k_1(<k)$  to $G_p^{\prime}$,
then $r>0$ for $G_{p+1}^{\prime}$.  Hence result is true for $n=p+1$. 

b. if we add a vertex of degree $k_1(\geq k)$ to $G_p^{\prime}$,
then there are following possibilities:

{\sl 1. $l$ get increased by one and $r=0$. } 

{\sl 2. $l$ get decreased by one and $r>0$. } 

{\sl 3. $l$ remains same and $r=0$. } 

{\sl 4. $l$ remains same and $r>0$. } 

Clearly, in all the cases result is true for $n=p+1$. 
\\ \\
Till now we have proved the first half part of first part
of the Theorem. Now we will prove remaining half part. To prove 
this let us consider that $G$ is connected. 
We will prove the result by mathematical induction on $n$. Let n=2.
So possible $G$ for this case is   

1. --------------------- 2.
 
Now 
$
A=\begin{bmatrix} 0 & 1\\1 & 0 \end{bmatrix}
$.
Here there does not exist any s in $A$ for which $K_s $ holds 
$\implies $ result is true for $n=2$. 
Let the result be true for $n=p$.  
We have to show that result will be true for $n=p+1$.
We can obtain $G_{p+1}^{\prime} $ from $G_p^{\prime} $ in two ways:
 
1. By adding a vertex and no edge 

2. By adding a vertex and $k>0$ edges
 \\ \\
1. By adding a vertex and no edge

Let vertex $v_x$ be added. 
In this case for $i=x$ we have
$\sum_{j=1}^n a_{ij} = 0$ 
$\implies r>0 $ 
$\implies$ result is true for $n=p+1$.
\\ \\
2. By adding a vertex and $k$ edges

Since $G_p^{\prime} $ is connected  
$\implies $ there does not exist any s in $A_p^{\prime} $  for which $K_s$ holds 
$\implies $ for each $s$ in $A_p^{\prime} $ either  
$ \sum_{p=s+1}^n a_{sp} \neq 0$ 
or 
there exist atleast one $t>s$ for which 
$\sum_{p=1}^{s} a_{tp} \neq 0$. 
Now let vertex $v_x $ is added to $G_p^{\prime} $ to obtain $G_{p+1} $ and $v_x$ is adjacent to $k$ edges.
So after addition of vertex $v_x $ for each $s$ in $A_{p+1}^{\prime} $ other than $x$ either 
$ \sum_{p=s+1}^{n+1} a_{sp} \neq 0$  
or 
there exist atleast one $t>s$ for which  
$\sum_{p=1}^{s} a_{tp} \neq 0$.  
Now if $x<n+1$ then 
there exist atleast one $t>x$ for which  
$\sum_{p=1}^{x} a_{tp} \neq 0$  
and if $x=n+1$ then  
$ \sum_{p=x+1}^{n+1} a_{sp} \neq 0$  
So result is true for n=p+1. 
Here we have proved second half of first part of the Theorem.
Now will prove second part of the Theorem i.e. $l+r$ gives number of connected components 
\\ \\
We know that if there exist any $s$ for which $K_s$ holds
then $G$ is disconnected. In algorithm D, $l$ calculates the 
number of $s$ for which $K_s$ holds i.e. 
$l = 1+ $ number of $s$ for which $K_s $ holds.
Also $r$ calculates the number of isolated vertices 
$\implies l + r $ gives the number of connected components. 
\\ \\ 
Now will prove last part of the Theorem i.e. $l_i - l_{i-1}$, where $1\leq i \leq y$ gives the order of
each connected component and vertices from $v_{l_{i-1}}$ to  $v_{l_{i}}$, where 
$1\leq i \leq y$, gives the vertices of each connected component of $G^{\prime}$. 
Since we know that, when there exist some $s$ for which $K_s$ holds,
then $G$ is disconnected and we increase $y$ by one at this point.
So from $l_{y-1}$ to $l_y$, there exist a connected component and vertices between
$l_{y-1}$ to $l_y$ gives the vertices of corresponding connected component.
Hence in this way we will obtain order of all connected components and their
corresponding vertices.
\end{proof}

\subsection{Example}
Let us consider the graph shown in Figure 1A with 5 vertices. For Figure 1A, the adjacency matrix is given as
$$
A=\begin{bmatrix}
0 & 1 & 0 & 1 & 0 \\
1 & 0 & 0 & 1 & 0 \\ 
0 & 0 & 0 & 0 & 1 \\
1 & 1 & 0 & 0 & 0 \\
0 & 0 & 1 & 0 & 0 
\end{bmatrix}
$$

\begin{figure}
	\centering
	\includegraphics[width=0.65\textwidth]{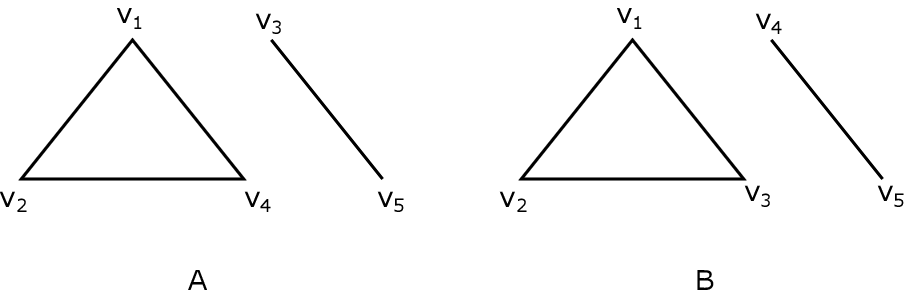}
	\caption{}
\end{figure}

To get the desired results, we follow the steps of algorithm D. 

Initially $n=5$ and $A=A^{\prime}$. 
Set $i\leftarrow 1, r\leftarrow 0$.

Now in $A^{\prime}$, there does not exist any $i$ for which  
$\sum_{j=1}^n a_{ij} = 0$  $\implies r=0$.
 
Set $i\leftarrow 2, p\leftarrow 2,j \leftarrow p, l\leftarrow 1,
y\leftarrow 0, l_0\leftarrow 0, Q\leftarrow \emptyset$. 
Now $k_2=1,\   k_3=5,\   k_4=1,\   k_5=3$ and $Q=\{ 1 \} $. 
Here $s=2=i $ but $K_2$ does not hold.  

Again $p=3, i=3 $ and $k_3=5,\   k_4=1,\   k_5=3 
\implies Q=\{ 1 \} \implies s=4$. 
Now $i=3$ i.e. $s \neq i$ 
$\implies$ Interchange $R_3 $ and $ R_4$,\ $ C_3$ and $ C_4$. We have
$$
A^{\prime}=\begin{bmatrix}
0 & 1 & 1 & 0 & 0 \\
1 & 0 & 1 & 0 & 0 \\ 
1 & 1 & 0 & 0 & 0 \\
0 & 0 & 0 & 0 & 1 \\
0 & 0 & 0 & 1 & 0 
\end{bmatrix}
$$

Now $K_3$ holds $\implies l++, i++, y++, p++$ and $l_y\leftarrow i$.
Again $k_4=5,\   k_5=3 \implies Q=\{ 3 \} \implies s=5$. 
Now $r=4$ and $s \neq r$ 
$\implies$ Interchange $R_5 $ and $R_4 $,\   $C_5 $ and $ C_4$.
Here $K_4$ does not hold. 
Algorithm ends here. Now $y++$ and $l_y\leftarrow n$. 
Here $K_3$ holds implies graph is disconnected and $l=2$ gives the number
of connected components. First connected component consists of vertices $v_1, v_2, v_3$ of $G^{\prime}$ and second
connected component consists of vertices $v_4, v_5$ of $G^{\prime}$. 
$G^{\prime}$ is shown in Figure 1B.

\end{document}